\documentclass[journal]{IEEEtran}
\usepackage{epsfig,latexsym,amsmath,amssymb,setspace,cite,color,graphicx,algorithm,algorithmic,cases}
\usepackage[caption=false,font=footnotesize]{subfig}
\usepackage[percent]{overpic}
\usepackage{stackrel}

\usepackage{paralist}
\usepackage{amsmath}
\usepackage{amsfonts}
\usepackage{amssymb}
\usepackage{dsfont}
\usepackage{bm}
\usepackage{amsthm}
\usepackage{newlfont}
\usepackage{float}
\usepackage{hyperref}
\usepackage{algorithm}
\usepackage{algorithmic}
\usepackage{enumerate}
\usepackage{chngcntr}
\usepackage{mathtools}
\usepackage{breqn}
\usepackage{stmaryrd}
\usepackage{cite}
\usepackage{amsmath}
\usepackage{amsfonts}
\usepackage{amssymb}
\usepackage{dsfont}
\usepackage{bm}
\usepackage{newlfont}
\usepackage{float}
\usepackage{cite}
\usepackage{algorithm}
\usepackage{algorithmic}
\usepackage{enumerate}
\usepackage{chngcntr}
\usepackage{mathtools}
\usepackage{breqn}
\usepackage{stmaryrd}

\usepackage{acronym}
 \hypersetup{
    colorlinks=true, %
    linkcolor= blue, %
    citecolor=blue %
}
\usepackage{amsmath} 
  
\newtheorem{thm}{Theorem} %
\newtheorem{lem}{Lemma}
\newtheorem{prop}{Proposition}

\newtheorem{defn}{Definition}
\newtheorem{rem}{Remark}

\title{Universal Covertness for\\ Discrete Memoryless Sources}

\author{R\'{e}mi A. Chou, Matthieu R. Bloch, and Aylin Yener%

 \thanks{R\'{e}mi A. Chou is with the Department of Electrical Engineering and Computer Science, Wichita State University, Wichita, KS 67260. Matthieu R. Bloch is with the School~of~Electrical~and~Computer~Engineering,~Georgia~Institute~of~Technology, Atlanta,~GA~30332--0250. Aylin Yener is with the Department of Electrical and Computer Engineering, The Ohio State University, Columbus, OH 43210. This work was supported in part by NSF under grants CIF-1319338, CNS-1314719,  CIF-1527074. Part of this work has been presented at the 54th Annual Allerton Conference on Communication, Control, and Computing~\cite{chou2016universal}. E-mails: remi.chou@wichita.edu, matthieu.bloch@ece.gatech.edu, yener@ece.osu.edu.
}

}

\DeclareMathOperator*{\argmax}{arg\,max}
\DeclareMathOperator*{\argmin}{arg\,min}
\DeclareMathOperator*{\plim}{p-lim}
\DeclareMathOperator*{\plimsup}{p-lim~sup}

\begin{document}
\sloppy
\allowdisplaybreaks[1]

\newenvironment{example}[1][Example]{\begin{trivlist}
\item[\hskip \labelsep {\bfseries #1}]}{\end{trivlist}}

\acrodef{DMS}[DMS]{Discrete Memoryless Source}
\renewcommand{\leq}{\leqslant}
\renewcommand{\geq}{\geqslant}

\maketitle

\begin{abstract}
Consider a sequence $X^n$ of length $n$ emitted by a Discrete
Memoryless Source (DMS) with unknown distribution $p_X$. The objective
is to construct a lossless source code that maps $X^n$ to a sequence
$\widehat{Y}^m$ of length $m$ that is indistinguishable, in terms of
Kullback-Leibler divergence, from a sequence emitted by another DMS with
known distribution $p_Y$. The main result is the existence of a coding
scheme that performs this task with an optimal
ratio $m/n$ equal to  $H(X)/H(Y)$, the ratio of the Shannon entropies of the two distributions, as $n$ goes to infinity. The coding scheme
overcomes the challenges created by the lack of knowledge about $p_X$
by a type-based universal lossless source coding scheme that produces as output an almost uniformly distributed sequence, followed by
another type-based coding scheme that jointly
performs source resolvability and universal lossless source
coding. The result recovers and extends  previous results that
either assume $p_X$ or $p_Y$ uniform, or $p_X$ known. The price paid
for these generalizations is the use of common randomness with
vanishing rate, whose length scales as the logarithm of
$n$. %
By allowing common randomness larger than the logarithm of $n$ but still
negligible compared to $n$, a constructive low-complexity encoding and
decoding counterpart to the main result is also provided for binary
sources by means of polar codes.%

\end{abstract}
 
\begin{IEEEkeywords}
 \noindent{}Universal source coding, resolvability, randomness extraction, random number conversion, covert communication, steganography, polar codes
\end{IEEEkeywords}
 \section{Introduction}
\label{sec:introduction}

We consider the problem illustrated in Figure~\ref{fig1}, in which $n$ realizations of a \ac{DMS} $(\mathcal{X},p_X)$, with finite alphabet $\mathcal{X}$ and \emph{unknown} distribution $p_X$, are to be encoded into a vector $\widehat{Y}^m$ of length $m$. While $m$ should be as small as possible, the vector $\widehat{Y}^m$ should not only allow asymptotic lossless reconstruction of $X^n$ but also be asymptotically indistinguishable, in terms of Kullback-Leibler (KL) divergence, from a sequence $Y^m$ emitted by a \ac{DMS} $(\mathcal{Y},p_Y)$, with finite alphabet $\mathcal{Y}$ and known distribution $p_Y$. We refer to this problem as \emph{universal covertness} for \acp{DMS}, since an adversary observing $\widehat{Y}^m$ would then be unable to distinguish $\widehat{Y}^m$ from the output of the \ac{DMS} $(\mathcal{Y},p_Y)$. The formal relation between the~closeness of the distribution of $\widehat{Y}^m$ to the distribution of $Y^m$ and the probability of detection by the adversary follows from standard results on hypothesis testing, e.g., \cite{maurer2000authentication,blahut1987principles}. 

\begin{figure}
 \centering
  \includegraphics[width=8.2cm]{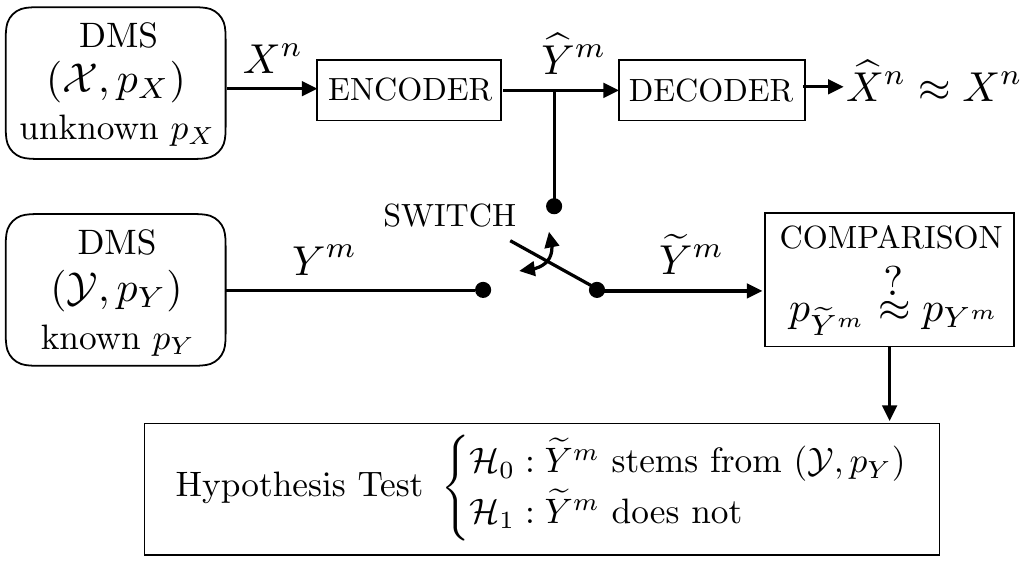}
  \caption{Illustration of universal covertness for DMSs.}
  \label{fig1}
 \end{figure}

Universal covertness generalizes and unifies several notions of random number generation and source coding found in the literature. For instance, 
\begin{inparaenum}[1)]
\item uniform lossless source coding~\cite{Chou13} corresponds to known $p_X$ and uniform $p_Y$;
\item random number conversion and source resolvability~\cite{HanBook,kumagai2013new} correspond to known $p_X$ and no reconstruction constraint;
\item universal source coding~\cite{bookCsizar} is obtained with known source entropy $H(X)$ and without distribution approximation constraint;
\item universal random number generation~\cite{oohama07} is obtained with known source entropy $H(X)$, uniform $p_Y$ and without  reconstruction constraint.
\end{inparaenum}
Universal covertness may also be viewed as a universal and noiseless counterpart of covert communication over noisy channels~\cite{bash2013limits,bloch2015covert,wang2015fundamental}. Most importantly, universal covertness relates to information-theoretic studies of information hiding and steganography~\cite{wang2008perfectly,cachin2004information}, yet with several notable differences that we now highlight.
\begin{itemize}
\item The problem in \cite{wang2008perfectly} consists in embedding a uniformly distributed message into a covertext without changing the covertext distribution, under a distortion reconstruction constraint. Universal covertness omits the distortion reconstruction constraint but relaxes the assumption of uniformly distributed message; this is motivated by the fact that message distributions encountered in practice are seldom uniform, and even optimally compressed data is only uniform in a weak sense~\cite{Han05,Hayashi08}. We point out that the perfect undetectability requirement enforced in~\cite{wang2008perfectly} is stronger than our asymptotic indistinguishability but largely relies on the presence of a long shared secret key.
\item The setting in~\cite[Section 4]{cachin2004information} is similar to universal covertness but does not address the problem of obtaining an optimal compression rate $m/n$ and indistinguishability is only measured in terms of \emph{normalized} KL-divergence. The extension in~\cite[Section 5]{cachin2004information} assumes that, unlike the adversary, the encoder only knows the entropy $H(Y)$ of the covertext and explicitly addresses the problem of estimating $p_Y$ from $n$ samples of the \ac{DMS} $(\mathcal{Y},p_Y)$. We recognize that, in practice, the covertext distribution $p_Y$ should be estimated from a finite number of samples, which necessarily limits the precision of the estimation. We take the view that the samples are public and in sufficient number so that all parties obtain the same estimates within an interval of confidence whose length is negligible compared to the uncertainty when estimating $p_Y$ from $m$ symbols of a \ac{DMS}.  %
\item As in~\cite{cachin2004information,wang2008perfectly}, universal covertness relies on a \emph{seed}, i.e., common randomness shared by the encoder and the decoder only; however, we shall see that the seed length used in our coding scheme is $\Theta(\log n)$, which is negligible compared to~$n$. This contrasts with seed lengths $\Theta(n \log n)$ in~\cite{wang2008perfectly} and $\Theta(n)$ in~\cite{cachin2004information}, although it is fair to mention that these larger key sizes enable perfect undetectability or perfect secrecy, which we do not require.
\end{itemize}
Beyond the generalizations offered by universal covertness described above, we note that the special case of uniform lossless source coding for \acp{DMS} with \emph{unknown distributions}, i.e., the case when $p_Y$ is uniform, is of particular interest to the design of secure communication schemes in settings where the uniformity of messages transmitted over a network is often a key assumption~\cite{Wyner75,Csiszar78}.

The idea of our proposed coding scheme is to approach the problem of universal covertness in two steps. In Step 1, universal uniform lossless source coding is performed through a type-based source coding scheme that makes the encoder output almost uniform. In Step 2, source resolvability with the additional constraint that the input be reconstructed from the output is performed with the result of Step 1 as input, so that the output of Step 2 approximates a given target distribution and allows recovery of the input of Step 1.

 We formally describe the problem in Section \ref{sec:probstat}. We study the special case of uniform  lossless source coding for \acp{DMS} with unknown distribution in Section \ref{sec:model}. Building upon the results of Section~\ref{sec:model}, we present our main result for universal covertness in Section \ref{sec:mres}. By allowing a larger amount of common randomness, whose rate still vanishes with the blocklength, we provide a constructive and low-complexity encoding and decoding scheme for universal covertness in Section~\ref{sec:con}. Finally, we  provide concluding remarks in Section~\ref{sec:concl}.
\section{Problem statement} \label{sec:probstat}
\subsection{Notation and basic inequalities} \label{sec:not}
 For $a,b \in \mathbb{R}_+$, we define $\llbracket a, b\rrbracket \triangleq [\lfloor a \rfloor, \lceil b \rceil] \cap \mathbb{N}$. For two functions $f$, $g$ from $\mathbb{N}$ to $\mathbb{R}_+$, we use the standard notation $f(n)=o(g(n))$ if $\lim_{n\to \infty} f(n)/g(n)=0$, $f(n)=O(g(n))$ if $\limsup_{n\to \infty} f(n)/g(n) < \infty$, and $f(n)=\Theta(g(n))$ if $\limsup_{n\to \infty} f(n)/g(n) < \infty$ and $\liminf_{n\to \infty} f(n)/g(n) >0$.
 For two distributions $p$ and $q$ defined over a finite alphabet $\mathcal{X}$, we define the variational distance $\mathbb{V}(p, q)\triangleq \sum_{x\in \mathcal{X}} |p(x) - q(x)|$, and denote the KL-divergence between $p$ and $q$ by $\mathbb{D}(p \lVert q)$ with the convention $\mathbb{D}(p \lVert q) = + \infty$ if there exists $x\in\mathcal{X}$ such that $q(x) =0$ and $p(x)>0$. Unless otherwise specified, capital letters denote random variables, whereas lowercase letters represent realizations of associated random variables, e.g., $x$ is a realization of the random variable $X$. We denote the indicator function by $\mathds{1}\{ \omega \}$, which is equal to $1$ if the predicate $\omega$ is true and $0$ otherwise. For any $x\in \mathbb{R}$, we define $[x]^+ \triangleq \max(0,x)$. For a sequence of random variables $(Z_n)_{n \in \mathbb{N}}$ that converges in probability towards a constant $C$, i.e., for any $\epsilon>0$, $\lim_{n\to\infty}\mathbb{P}\big(|Z_n-C| > \epsilon) = 0$, we use the notation $\displaystyle\plim_{n \to \infty} Z_n = C$.   We will also  use the following inequalities for KL-divergence, Eq.~\eqref{eqKL1} is from \cite{sason2015bounds}, Eq.~\eqref{eqKL3} is from \cite{chou2016empirical}, and Eq. \eqref{eqKL2} can easily be derived from the definition of the KL-divergence and Pinsker's inequality.
\begin{lem} \label{corent}
Let $p$, $q$, $r$ be distributions over the finite alphabet $\mathcal{X}$. %
Let $H(p)$ and $H(q)$ denote the Shannon entropy associated with $p$ and $q$, respectively. Let $\mu_q \triangleq \displaystyle\min_{x \in \mathcal{X}} q(x)$.
 We have
\begin{equation}
\mathbb{D}\left(p \Vert q \right)  \leq  \log \left(\mu_q^{-1} \right)\mathbb{V}\left(p , q \right),  \label{eqKL1} 
\end{equation}
\begin{multline}
  H(q) - H(p) \\ 
  \leq \mathbb{D}(p\lVert q) + \log \left(\mu_q^{-1} \right) \sqrt{2 \ln2} \sqrt{ \min ( \mathbb{D}(p\lVert q), \mathbb{D}(q\lVert p)) }, \label{eqKL2}
 \end{multline}
\begin{multline}
\mathbb{D}\left(p \Vert q \right)  \leq  \log \left(\mu_q^{-1} \right) \sqrt{2 \ln 2} \left[ \sqrt{ \min(\mathbb{D}\left(p \Vert r\right), \mathbb{D}\left(r \Vert p \right))} \right.  \\
  \left.  \phantom{-----}+\sqrt{ \min(\mathbb{D}\left(q \Vert r\right), \mathbb{D}\left(r \Vert q \right))} \right]. \label{eqKL3}
\end{multline}
\end{lem}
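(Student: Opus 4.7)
The plan is to prove (1) as a standalone reverse--Pinsker bound, and then to derive (2) and (3) as short consequences of (1), Pinsker's inequality, and the triangle inequality for the variational distance.

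For (1), I would begin with the identity
\begin{align*}
\mathbb{D}(p\Vert q) = -H(p) + \sum_{x\in\mathcal{X}} p(x)\log\frac{1}{q(x)} = H(q) - H(p) + \sum_{x\in\mathcal{X}} (p(x)-q(x))\log\frac{1}{q(x)},
\end{align*}
obtained by expanding the definition of the KL divergence and then adding and subtracting $H(q)=\sum_x q(x)\log(1/q(x))$. Because $\log(1/q(x)) \in [0,\log(1/\mu_q)]$ uniformly in $x$, the signed sum on the right is bounded in absolute value by $\log(1/\mu_q)\,\mathbb{V}(p,q)$. Obtaining (1) in the clean form claimed requires absorbing the $H(q)-H(p)$ term, which I would do by splitting $\sum_x p(x)\log(p(x)/q(x))$ over $A=\{x:p(x)\geq q(x)\}$ and its complement: on $A$ the inequality $p(x)\leq 1$ gives $p(x)\log p(x)\leq 0$ and hence $p(x)\log(p(x)/q(x))\leq p(x)\log(1/q(x))$, while on the complement every summand is non-positive and may be dropped upward. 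This is the reverse Pinsker inequality from \cite{sason2015bounds,chou2016empirical}, and I expect this to be the most delicate step of the proof.

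For (2), I rearrange the same identity as $H(q)-H(p)=\mathbb{D}(p\Vert q)-\sum_x(p(x)-q(x))\log(1/q(x))$, then bound the signed sum in magnitude by $\log(1/\mu_q)\,\mathbb{V}(p,q)$ as above. Pinsker's inequality gives $\mathbb{V}(p,q)\leq\sqrt{2\ln 2\cdot \mathbb{D}(p\Vert q)}$, and because $\mathbb{V}(p,q)=\mathbb{V}(q,p)$ the analogous bound holds with $\mathbb{D}(q\Vert p)$ in place of $\mathbb{D}(p\Vert q)$, so
\begin{align*}
\mathbb{V}(p,q)\leq\sqrt{2\ln 2}\sqrt{\min(\mathbb{D}(p\Vert q),\mathbb{D}(q\Vert p))},
\end{align*}
which substituted back yields (2).

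For (3), I apply (1) to write $\mathbb{D}(p\Vert q)\leq\log(1/\mu_q)\,\mathbb{V}(p,q)$, invoke the triangle inequality $\mathbb{V}(p,q)\leq \mathbb{V}(p,r)+\mathbb{V}(r,q)$, and control each of the two summands by Pinsker exactly as in (2). The main obstacle is the careful derivation of (1); once that bound is established, (2) and (3) are one-line consequences.
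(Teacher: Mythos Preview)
The paper does not prove Lemma~\ref{corent}; it simply cites \cite{sason2015bounds} for \eqref{eqKL1} and \cite{chou2016empirical} for \eqref{eqKL3}, so there is no in-paper argument to compare against. Your derivations of \eqref{eqKL2} and \eqref{eqKL3} are clean and correct \emph{granted} \eqref{eqKL1}: the identity $H(q)-H(p)=\mathbb{D}(p\Vert q)-\sum_x(p(x)-q(x))\log(1/q(x))$ combined with $|\sum_x(p-q)\log(1/q)|\leq \log(1/\mu_q)\,\mathbb{V}(p,q)$ and Pinsker gives \eqref{eqKL2}, and \eqref{eqKL1} plus the triangle inequality for $\mathbb{V}$ plus Pinsker gives \eqref{eqKL3}.

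The gap is in your sketch of \eqref{eqKL1}. Your splitting argument yields
\[
\mathbb{D}(p\Vert q)\;\leq\;\sum_{x\in A}p(x)\log\frac{1}{q(x)},\qquad A=\{x:p(x)\geq q(x)\},
\]
but this intermediate quantity is \emph{not} bounded by $\log(1/\mu_q)\,\mathbb{V}(p,q)$. Take $q$ uniform on three points and $p=(\tfrac13+\delta,\tfrac13+\delta,\tfrac13-2\delta)$ with $\delta>0$ small: then $A=\{1,2\}$, your bound equals $(2/3+2\delta)\log 3\approx 1.06$, while $\log(1/\mu_q)\,\mathbb{V}(p,q)=4\delta\log 3$, which is arbitrarily small. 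The negative contribution you dropped on $A^c$ is precisely what makes the true $\mathbb{D}(p\Vert q)$ small here, so discarding it destroys the comparison with $\mathbb{V}(p,q)$. In short, the step ``absorb $H(q)-H(p)$ by restricting to $A$ and using $p(x)\leq 1$'' does not land on $\log(1/\mu_q)\,\mathbb{V}(p,q)$; a genuinely different argument (as in \cite{sason2015bounds}) is needed for \eqref{eqKL1}. Since your route to \eqref{eqKL3} passes through \eqref{eqKL1}, that part inherits the same gap.
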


\subsection{Model for universal covertness} \label{sec:covm}
Consider a discrete memoryless source $( \mathcal{X},p_X )$.  Let $n \in \mathbb{N}$, $d_n \in \mathbb{N}$, and let $U_{d_n}$ be a uniform random variable over $\mathcal{U}_{d_n} \triangleq \llbracket 1, 2^{d_n}\rrbracket$, independent of $X^n$. In the following we refer to $U_{d_n}$ as the \emph{seed} and $d_n$ as its length. As illustrated in Figure~\ref{figmodel}, our objective is to design a source code to compress and reconstruct the  source $(\mathcal{X},p_{X})$, whose distribution is unknown, with the assistance of a seed $U_{d_n}$ and such that the encoder output approximates a known target distribution $p_Y$ with respect to the KL-divergence. 

\begin{figure}
\centering
  \includegraphics[width=7.7cm]{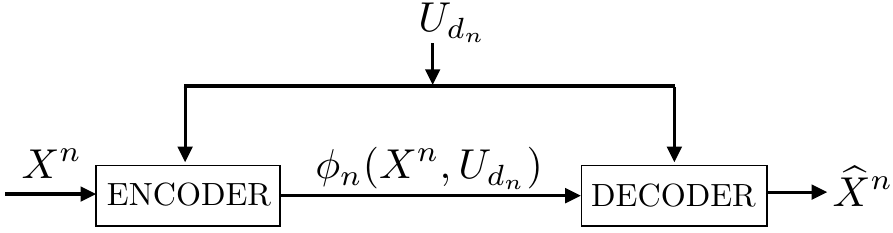}
  \caption{Universal covertness assisted with a seed (common randomness).}
  \label{figmodel}
\end{figure}
\begin{defn} \label{def_ucc}
An $(n,2^{d_n})$ variable-length universal covert source code for a DMS $(\mathcal{X},p_X)$ with respect to the DMS $(\mathcal{Y},p_Y)$ consists of 
\begin{itemize}
	\item A seed $U_{d_n}$ (with length $d_n$) uniformly chosen at random in the set $\mathcal{U}_{d_n} \triangleq \llbracket 1, 2^{d_n} \rrbracket$ and independent of all other random variables;
	\item An encoding function $\phi_n : \mathcal{X}^n \times \mathcal{U}_{d_n} \to \mathcal{Y}^m$ that takes as input the seed $U_{d_n}$ and the sequence $X^n$ emitted by the DMS $(\mathcal{X},p_X)$;
	\item A decoding function $\psi_n : \mathcal{Y}^m \times \mathcal{U}_{d_n} \to \mathcal{X}^n$. 
\end{itemize}
\end{defn}

\begin{rem}
We assume that $p_X$ is unknown; hence, $\phi_n$ and $\psi_n$ do not depend on prior knowledge about $p_X$  but are allowed to depend on the specific sequence of realizations of the DMS $(\mathcal{X},p_X)$, i.e.,  $(\phi_n,\psi_n)$ describes a variable-length code. Hence, 
 $m$ is a random variable that is a function of $X^n$  and is written as $m(X^n)$ to emphasize this point.
\end{rem}
The performance of a universal covert source code is measured in terms of 
\begin{enumerate}[(i)]
	\item The average probability of error $$\mathbb{P}[  X^n \neq \psi_n (\phi_n(X^n,U_{d_n}),U_{d_n})];$$
	\item Covertness, i.e., the closeness of the encoder output to a target distribution   $p^{\otimes m(X^n)}_{Y} \triangleq \prod_{i =1}^{m(X^n)} p_{Y}$, $$\mathbb{D} \left(p_{\phi_n(X^n,U_{d_n})} \lVert  p^{\otimes m(X^n)}_{Y} \right);$$ 
	\item Its output length to input length ratio ${m(X^n)}/n$, which should be minimized;
	\item The seed length $d_n$, which should be negligible compared to $n$.
\end{enumerate}

\begin{defn} \label{defcov}
Consider universal covertness for a DMS $(\mathcal{X},p_X)$ with respect to the DMS $(\mathcal{Y},p_Y)$. A rate $R$ is achievable if there exists a sequence of $(n,2^{d_n})$ variable-length universal covert source codes such that
\begin{align*}
\plimsup_{n \to \infty } \frac{m(X^n)}{n} &\leq R,\\
 \lim_{n \to \infty } \frac{d_n}{n} &=0,\\
\lim_{n \to \infty } \mathbb{P}[  X^n \neq \psi_n (\phi_n(X^n,U_{d_n}),U_{d_n})] &= 0,\\
\lim_{n \to \infty } \mathbb{D} \left(p_{\phi_n(X^n,U_{d_n})} \lVert  p^{\otimes m(X^n)}_{Y} \right) &=0 . 
\end{align*}
\end{defn}
We are interested in determining the infimum of all such achievable rates. %
\begin{rem}
Usually, for variable-length settings,   asymptotic average rates are considered (see, e.g.,~\cite{vembu1995generating,yagi2017variable} in the context of random number generation), i.e., convergence in mean is considered for coding rates. In this paper, we consider convergence in probability for the rate $m(X^n)/n$ for convenience, which also implies convergence in mean since the ratio $m(X^n)/n$ will be bounded in our setting. Our results will also show that the length of the encoder output concentrates with high probability around its optimal value $H(X)/ H(Y)$ for large $n$.
\end{rem}

\begin{rem}
Note that in the covertness condition, the term $\mathbb{D} \left(p_{\phi_n(X^n,U_{d_n})} \lVert  p^{\otimes m(X^n)}_{Y} \right)$ is a random variable as the length $m(X^n)$ of the encoder output is itself a random variable.
\end{rem}
\section{Special case: Uniform lossless source coding for \acp{DMS} with unknown distribution}\label{sec:model}
In this section, we study the problem described in Section~\ref{sec:covm}, in which $p_Y$ is the uniform distribution over $\mathcal{Y}$. We refer to this special case as uniform lossless source coding for \acp{DMS} with \emph{unknown distributions}. We build upon the solution proposed for this special case to provide a solution for the general case, i.e., arbitrary $p_Y$, in Section~\ref{sec:mres}.

The results of this section generalize and complement an earlier result for \acp{DMS} with known distributions~\cite{Chou13,Chou14b,Chou15d} when \emph{fixed-length} source coding is considered. 

\subsection{Definition of uniform lossless source coding} \label{sec:m}
For sources with unknown distributions, the problem of uniform lossless source coding aims at \emph{jointly} performing  universal lossless source coding \cite{bookCsizar,oohama1994universal} and universal randomness extraction~\cite{oohama07}.  More formally, universal uniform source coding is defined as follows.

\begin{defn} \label{def_ucc}
An $(n,2^{d_n})$ variable-length universal uniform source code is an $(n,2^{d_n})$ variable-length universal covert code for a DMS $(\mathcal{X},p_X)$ with respect to the DMS $(\{0,1\},p_U)$, where $p_U$ is the uniform distribution over $\{ 0,1\}$. We define its rate as $m(X^n)/n$.
\end{defn}
Similar to a universal covert code, the performance of a uniform source code is measured in terms of
\begin{enumerate}[(i)]
	\item The average probability of error $$\mathbb{P}[  X^n \neq \psi_n (\phi_n(X^n,U_{d_n}),U_{d_n})];$$
	\item The uniformity of its output $$ \mathbb{D} \left(p_{\phi_n(X^n,U_{d_n})} \lVert p_{U_{M_n(X^n)}} \right),$$ where $p_{U_{M_n(X^n)}}$ is the uniform distribution over $\mathcal{M}_n(X^n) \triangleq \llbracket 1, M_n (X^n)  \rrbracket$ with $M_n (X^n) \triangleq 2^{m(X^n)}$;
	\item The rate, which should be close to $H(X)$;
	\item The seed length $d_n$, which should be negligible compared to $n$.
\end{enumerate}
\begin{defn} \label{defusc}
Consider universal uniform source coding of a DMS $(\mathcal{X},p_X)$. A rate $R$ is achievable if there exists a sequence of $(n,2^{d_n})$ variable-length universal uniform source codes such that
\begin{align*}
\plimsup_{n \to \infty } \frac{m(X^n)}{n} &\leq R,\\
 \lim_{n \to \infty } \frac{d_n}{n} &=0,\\
\lim_{n \to \infty } \mathbb{P}[  X^n \neq \psi_n (\phi_n(X^n,U_{d_n}),U_{d_n})] &= 0,\\
\lim_{n \to \infty } \mathbb{D} \left(p_{\phi_n(X^n,U_{d_n})} \lVert p_{U_{M_n(X^n)}} \right) &=0 . 
\end{align*}
\end{defn}

\subsection{Method of types} \label{sec:csd}
We here recall known facts about the method of types~\cite{bookCsizar}.
Let $n \in \mathbb{N}$. For any sequence $x^n \in \mathcal{X}^n$, the type of $x^n$ is its empirical distribution given by $\left( \tfrac{1}{n}  \sum_{i=1}^n \mathds{1} \{ x_i =x \} \right)_{x \in \mathcal{X}} .$ Let $\mathcal{P}_n(\mathcal{X})$ denote the set of all types over $\mathcal{X}$, and $T_{\bar{X}}^n$ denote the set of sequences $x^n$ with type $p_{\bar{X}} \in \mathcal{P}_n(\mathcal{X})$. We will use the following lemma extensively.
\begin{lem} \label{lem1}
We have~\cite{bookCsizar}\\1) $|\mathcal{P}_n(\mathcal{X})| \leq (n+1)^{|\mathcal{X}|}$;	\\
	2) $ (n+1)^{-|\mathcal{X}|} 2^{nH(\bar{X})}  \leq |T^n_{\bar{X}}| \leq 2^{nH(\bar{X})}$;\\ 3) For $x^n \in T^n_{\bar{X}}$, $p_{X^n}(x^n) = 2^{-n (H(\bar{X}) + \mathbb{D}(p_{\bar{X}}\lVert p_X))}$.
\end{lem}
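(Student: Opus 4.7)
My plan is to treat the three items as independent elementary combinatorial and probabilistic facts, and to handle them in the order (1), (3), (2), since (2) depends on the probability formula proved in (3).

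For item (1), I would note that a type $p_{\bar X} \in \mathcal{P}_n(\mathcal{X})$ is uniquely determined by the tuple of integer counts $(n_x)_{x\in\mathcal{X}}$ with $n_x = n\,p_{\bar X}(x) \in \{0,1,\dots,n\}$. Counting these tuples without imposing the constraint $\sum_x n_x = n$ yields at most $(n+1)^{|\mathcal{X}|}$ possibilities, which is the claimed bound. (A tighter stars-and-bars estimate is available but not needed here.)

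For item (3), I would compute the probability of a specific $x^n \in T^n_{\bar X}$ directly from the memoryless assumption:
\begin{align*}
p_{X^n}(x^n) = \prod_{x\in\mathcal{X}} p_X(x)^{n_x} = 2^{n\sum_{x} p_{\bar X}(x)\log p_X(x)},
\end{align*}
and then rewrite the exponent as $\sum_x p_{\bar X}(x)\log p_X(x) = -H(\bar X) - \mathbb{D}(p_{\bar X}\|p_X)$ by adding and subtracting $\sum_x p_{\bar X}(x)\log p_{\bar X}(x)$. This gives the identity $p_{X^n}(x^n) = 2^{-n(H(\bar X) + \mathbb{D}(p_{\bar X}\|p_X))}$.

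For item (2), the upper bound follows immediately by applying item (3) to the auxiliary memoryless source with distribution $p_{\bar X}$ (so that $\mathbb{D}(p_{\bar X}\|p_{\bar X})=0$): every element of $T^n_{\bar X}$ has probability $2^{-nH(\bar X)}$ under $p_{\bar X^n}$, so $|T^n_{\bar X}|\,2^{-nH(\bar X)} = p_{\bar X^n}(T^n_{\bar X}) \leq 1$. The lower bound is the only step with any real content: I would show that for any type $p_{\bar X'}\in\mathcal{P}_n(\mathcal{X})$ the inequality $p_{\bar X^n}(T^n_{\bar X'}) \leq p_{\bar X^n}(T^n_{\bar X})$ holds, by writing the ratio of the two probabilities using item (3) and the multinomial coefficient formula for $|T^n_{\bar X'}|$, and then bounding factorials via $n!/m! \leq n^{n-m}$ for $m\leq n$ to conclude that the ratio is at most $1$. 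Combining this maximality with $1 = \sum_{p_{\bar X'}\in\mathcal{P}_n(\mathcal{X})} p_{\bar X^n}(T^n_{\bar X'}) \leq |\mathcal{P}_n(\mathcal{X})|\, p_{\bar X^n}(T^n_{\bar X}) \leq (n+1)^{|\mathcal{X}|}\,|T^n_{\bar X}|\,2^{-nH(\bar X)}$ and rearranging yields the advertised lower bound. The only step that is slightly tricky rather than purely mechanical is the factorial comparison used to show that the empirical type maximizes its own type-class probability; everything else is bookkeeping.
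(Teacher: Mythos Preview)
Your proof is correct and is essentially the standard argument from Csisz\'ar--K\"orner, which is exactly the reference the paper cites; the paper itself does not prove Lemma~\ref{lem1} but simply invokes it as a known fact. Your ordering (1), (3), (2) and the use of the factorial inequality $k!/l!\leq k^{k-l}$ to show that the empirical type maximizes its own type-class probability are precisely the steps in the cited textbook proof, so there is nothing to compare.
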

 
\subsection{Coding scheme} \label{sec:cs}
\subsubsection{Preliminary definitions}
 Consider $\alpha,\beta$ such that $\alpha> \beta >1$.
 Define $\gamma_n \triangleq  \lceil |\mathcal{X}|\log (n+1) \rceil$ and $a_n \triangleq 2^{\lceil \alpha \log n \rceil} \in \llbracket n^{\alpha},2 n^{\alpha} \rrbracket$. The seed is written as a pair $u_{d_n} \triangleq (u_1,u_2) \in \mathcal{U}_{d_n} \triangleq  \llbracket 0, a_n -1 \rrbracket \times \llbracket 0, 2^{\gamma_n} -1 \rrbracket$ with length $$d_n \triangleq \gamma_n + \lceil \alpha \log n \rceil = \Theta(\log n).$$  For any type $p_{\bar{X}} \in \mathcal{P}_n(\mathcal{X})$, define $$v_n(p_{\bar{X}}) \triangleq u_1\mod b_n(p_{\bar{X}}), $$ where $$b_n(p_{\bar{X}}) \triangleq \left\lfloor \frac{2^{\lceil \log |T_{\bar{X}}^n| + \beta \log n \rceil}}{|T_{\bar{X}}^n|}  \right\rfloor  \in \llbracket n^{\beta},2 n^{\beta} \rrbracket.$$
Next, consider an arbitrary injective mapping $s$
  \begin{align*}
 	s: \mathcal{P}_n(\mathcal{X}) &\to  \llbracket 0, 2^{\gamma_n} -1\rrbracket\\
 	 p_{\bar{X}} &\mapsto s(p_{\bar{X}}),
 \end{align*}
 which uniquely indexes the types in $\mathcal{P}_n(\mathcal{X})$. Finally, for any type $p_{\bar{X}} \in \mathcal{P}_n(\mathcal{X})$ with index $s(p_{\bar{X}}) \in \llbracket 0, 2^{\gamma_n} -1 \rrbracket$, consider an arbitrary one-to-one mapping $k_{s(p_{\bar{X}})}$
 \begin{align*}
 	k_{s(p_{\bar{X}})}: T_{\bar{X}}^n &\to \llbracket 0, |T_{\bar{X}}^n| -1 \rrbracket\\
 	 x^n &\mapsto k_{s(p_{\bar{X}})}(x^n),
 \end{align*}
which uniquely indexes the sequences of type $p_{\bar{X}}$. We also define for any type $p_{\bar{X}} \in \mathcal{P}_n(\mathcal{X})$, $$c_n (p_{\bar{X}})\triangleq 2^{\lceil \log |T_{\bar{X}}^n| +  \beta \log n \rceil},$$ and $$c_n(X^n) \triangleq c_n(p_{\bar{X}}(X^n)),$$  where $p_{\bar{X}}(X^n)$  is the type of $X^n$.

 \subsubsection{Encoder and decoder}

 The encoder is the map
 \begin{align*}
\phi_n :  \mathcal{X}^n \times \mathcal{U}_{d_n} & \rightarrow \llbracket 0,c_n(X^n)-1 \rrbracket \times \llbracket 0, 2^{\gamma_n} -1\rrbracket\\
 (x^n,(u_1,u_2)) &\mapsto \left(\phi^{(1)}_n(x^n,u_1), \phi^{(2)}_n(x^n,u_2)\right), 
\end{align*}
where for any $p_{\bar{X}} \in \mathcal{P}_n(\mathcal{X})$, for any $x^n \in T_{\bar{X}}^n$,
 \begin{align}
\phi^{(1)}_n (x^n,u_1) &\triangleq  k_{s(p_{\bar{X}})}(x^n)+  v_n(p_{\bar{X}})|T_{\bar{X}}^n| , \label{eqdith}\\
\phi^{(2)}_n (x^n,u_2) &\triangleq  s(p_{\bar{X}}) \oplus u_2,
\end{align}
where $s(p_{\bar{X}}) \oplus u_2$ denote the modulo-2 addition between the binary representation of $s(p_{\bar{X}})$ and $u_2$. Note that the idea in \eqref{eqdith} is to introduce a dithering term to make $\phi^{(1)}_n (x^n,u_1)$ almost uniform.

The decoder is the map $$\psi_n: \llbracket 0,c_n(X^n)-1 \rrbracket \times \llbracket 0 , 2^{\gamma(n)} -1 \rrbracket \times \mathcal{U}_{d_n} \rightarrow \mathcal{X}^n $$ with
\begin{align*}
\psi_n: 
(i,j,(u_1,u_2)) &\mapsto k^{-1}_{u_2 \oplus j}\left(i-v_n(p_{\bar{X}})|T_{\bar{X}}^n| \right)
\end{align*}
where $v_n(p_{\bar{X}})|T_{\bar{X}}^n|$ can be computed from the knowledge of the type index $j$ and $u_1$.

\begin{rem}
Since a given sequence $x^n \in \mathcal{X}^n$ is uniformly distributed in its class type, one could think of the following simpler encoding scheme.
Let $p_{\bar{X}} \in \mathcal{P}_n(\mathcal{X})$ and $x^n \in T_{\bar{X}}^n$.
Choose \begin{align*}
 	k_{s(p_{\bar{X}})}: T_{\bar{X}}^n &\to \llbracket 0, 2^{\lceil \log |T_{\bar{X}}^n| \rceil} -1 \rrbracket\\
 	 x^n &\mapsto k_{s(p_{\bar{X}})}(x^n),
 \end{align*}
to uniquely index the sequences of type $p_{\bar{X}}$ and define
 \begin{align*}
\phi^{(1)}_n (x^n,u_1) &\triangleq  k_{s(p_{\bar{X}})}(x^n).
\end{align*}
However, such an encoding scheme would not allow the encoder output to be almost uniformly distributed. Indeed, since 
$ |T_{\bar{X}}^n|    \leq  2^{\lceil\log |T_{\bar{X}}^n| \rceil} <  2 |T_{\bar{X}}^n| ,$
between $0$ and $|T_{\bar{X}}^n|$ values will never be taken by $k_{s(p_{\bar{X}})}(x^n)$, and one can show that, for some types, the number of values that are never taken is too large to allow uniformity of the encoder~output.
 \end{rem}

\subsection{Results} \label{sec:resdc}
\begin{thm} \label{Cor2}
The coding scheme of Section \ref{sec:cs} provides a sequence of $(n,2^{d_n})$ variable-length universal uniform source codes, such that for any DMS with unknown distribution $p_X$, we have 
\begin{align}
\plim_{n \to \infty } \frac{m(X^n)}{n} &\leq H(X),\label{c4}\\
 d_n  &= \Theta (\log n), \label{c1}\\
 \mathbb{P}[  X^n \neq \psi_n (\phi_n(X^n,U_{d_n}),U_{d_n})] &= 0, \label{c2} \\
\lim_{n \to \infty } \mathbb{D} \left(p_{\phi_n(X^n,U_{d_n})} \lVert p_{U_{M_n(X^n)}} \right) &=0 .  \label{c3}
\end{align}
\end{thm}
\begin{proof}
See Section \ref{sec_proof}.	
\end{proof}

\begin{prop}
The asymptotic rate $\displaystyle\plim_{n \to \infty}  \frac{m(X^n)}{n}$ in Theorem~\ref{Cor2} is optimal.
\end{prop}

\begin{proof} 
See Appendix \ref{App_converse}.
\end{proof}
\subsection{Proof of Theorem \ref{Cor2}} \label{sec_proof}

	By construction, we clearly have \eqref{c1} and \eqref{c2}. We now prove \eqref{c3} through a series of lemmas.
We first prove almost uniformity of the variables  	$V_n(p_{\bar{X}})$ that appear in the coding scheme.
	
	\begin{lem} \label{lem_V}
For any type $p_{\bar{X}} \in \mathcal{P}_n(\mathcal{X})$, $V_n(p_{\bar{X}})$ is almost uniformly distributed over $\llbracket 0, b_n(p_{\bar{X}}) -1\rrbracket$ in the sense that 
\begin{align*}
\mathbb{V} \left( p_{V_n(p_{\bar{X}}) }, p_{\bar{V}_n(p_{\bar{X}}) }\right) = O\left(\frac{1}{n^{\alpha-\beta}}\right),
\end{align*}
where $p_{\bar{V}_n(p_{\bar{X}}) }$ is the uniform distribution over $\llbracket 0, b_n(p_{\bar{X}}) -1 \rrbracket$.	
\end{lem}

\begin{proof}
	See Appendix \ref{App_V}.
	\end{proof}

Using Lemma \ref{lem_V}, we next prove that conditioned on the type of the sequence to compress, the encoder output $\phi^{(1)}_n(X^n,U_{d_n})$ approximates a uniform distribution with respect to the variational distance.

\begin{lem} \label{lemcond}
	For any type $p_{\bar{X}} \in \mathcal{P}_n(\mathcal{X})$, we have
\begin{align*}
\mathbb{V} \left( p_{\phi^{(1)}_n(X^n,U_{d_n})|X^n \in T_{\bar{X}}^n } , p_{U(p_{\bar{X}})}\right)		= O\left(\frac{1}{n^{\min(\alpha - \beta, \beta)}}\right),
\end{align*}
where	$ p_{U(p_{\bar{X}})}$ is the uniform distribution over   
$ \llbracket 0,c_n(p_{\bar{X}})-1 \rrbracket$.%
\end{lem}
\begin{proof}
	See Appendix \ref{App_lemcond}.
\end{proof}

Using Lemma \ref{lemcond}, we deduce the following result unconditional on the type of the sequence to compress.

\begin{lem} \label{lem_avg}
Define $p_{U(X^n)}$ as the uniform distribution over $\llbracket 0,c_n(X^n)-1 \rrbracket$. 
\begin{align*}
\mathbb{V} \left( p_{\phi^{(1)}_n(X^n,U_{d_n})} , p_{U(X^n)}\right)		= O\left(\frac{1}{n^{\min(\alpha - \beta, \beta)}}\right).
\end{align*}
\end{lem}
\begin{proof}
See Appendix \ref{App_avg}.
\end{proof}

From Lemma \ref{lem_avg}, which quantifies the uniformity of the encoder output $\phi^{(1)}_n(X^n,U_{d_n})$ in terms of variational distance, we deduce the following result which now quantifies uniformity in terms of KL-divergence.
\begin{lem} \label{lem_div}
We have
\begin{align*}
\mathbb{D} \left( p_{\phi^{(1)}_n(X^n,U_{d_n})} \lVert p_{U(X^n)} \right)		= O\left(\frac{1}{n^{\min(\alpha - \beta, \beta)-1}}\right).
\end{align*}

\end{lem}
\begin{proof}
See Appendix \ref{App_div}.
\end{proof}

Finally, let $p_{U_{M_n(X^n)}}$ be the uniform distribution over $\llbracket 0,c_n(X^n)-1 \rrbracket \times \llbracket 0, 2^{\gamma_n} -1\rrbracket$. We have
\begin{align*}
	\mathbb{D} \left(p_{\phi_n(X^n,U_{d_n})} \lVert p_{U_{M_n(X^n)}} \right)
	& \stackrel{(a)} = \mathbb{D} \left( p_{\phi^{(1)}_n(X^n,U_{d_n})} \lVert p_{U(X^n)} \right) \\
	& \stackrel{(b)} = O\left(\frac{1}{n^{\alpha/2 -1}}\right),
\end{align*}
where $(a)$ holds by the chain rule for divergence, independence between $p_{\phi^{(1)}_n(X^n,U_{d_n})}$ and $p_{\phi^{(2)}_n(X^n,U_{d_n})}$, and perfect uniformity of $\phi^{(2)}_n(X^n,U_{d_n})$ by construction, $(b)$ holds by choosing $\beta = \alpha /2$ in Lemma \ref{lem_div}. We thus conclude that \eqref{c3} holds by choosing $\alpha>2$. 

We now prove \eqref{c4}.  
Let $T_{\bar{X}}^n(X^n)$ denote the type set to which $X^n$ belongs. Let $\widehat{H}(X^n)$ denote the plug-in estimate of $H(X)$ using $X^n$ \cite{basharin1959statistical}. The encoder output length is  
\begin{align*}
\log (c_n (X^n) \times 2^{\gamma_n}) 
& \stackrel{(a)}\leq \log |T_{\bar{X}}^n(X^n)| + \beta \log n +1 + \gamma_n \\
& \stackrel{(b)} \leq n\widehat{H}(X^n) + \beta \log n +1 + \gamma_n \\
& \stackrel{(c)} \leq n\widehat{H}(X^n) + O( \log n), 
\end{align*}
where $(a)$ holds by definition of $c_n (X^n)$, $(b)$ holds by Lemma~\ref{lem1} and because $\widehat{H}(X^n)=H(p_{\bar{X}}(X^n))$, $(c)$ holds by the definition of $\gamma_n$. Hence, we conclude that \eqref{c4} holds by~\cite{basharin1959statistical}. 

\section{Covertness for \acp{DMS} with unknown distribution}
 \label{sec:mres}
Our coding scheme for universal covertness uses two building blocks, which are two special cases of the model described in Section~\ref{sec:covm}: (i) Uniform source coding for DMS with unknown distribution, studied in Section \ref{sec:model}; (ii)~Source resolvability with the additional constraint that the input should be recoverable from the output, which corresponds to the case in which $p_X$ is known to be the uniform distribution.

\subsection{Results}
Building upon our construction in Theorem \ref{Cor2} we obtain the following results.
\begin{thm} \label{Cor9}
There exists a sequence of $(n,2^{d_n})$ variable-length universal covert source code for the DMS $(\mathcal{X},p_X)$ with respect to the DMS $(\mathcal{Y},p_Y)$, defined by the encoding/decoding functions $(\phi_n,\psi_n)$ with encoder output length $m(X^n)$, such that if one defines $\widehat{Y}^{m(X^n)} \triangleq \phi_n(X^n,U_{d_n})$,~then 
\begin{align}
 \plim_{n \to \infty}  \frac{m(X^n)}{n}  &\leq {H(X)}/{H(Y)}, \label{d1}\\
  d_n & = \Theta ( \log n), \label{d2} \\
\mathbb{P}[  X^n \neq \psi_n (\widehat{Y}^{m(X^n)},U_{d_n})] &= 0,  \label{d3}\\
 \lim_{n \to \infty}  \mathbb{D} \left(p_{\widehat{Y}^{m(X^n)}} \lVert  p^{\otimes m(X^n)}_{Y} \right) &=0. \label{d4}
\end{align}
\end{thm}
\begin{proof}
	See Section \ref{sec:proof2}.
\end{proof}

\begin{prop}
The asymptotic rate  $\displaystyle\plim_{n \to \infty}  \frac{m(X^n)}{n}$ in Theorem~\ref{Cor9} is optimal.
\end{prop}

\begin{proof}
	See Appendix \ref{App_converse}.
\end{proof}
\subsection{Proof of Theorem \ref{Cor9}} \label{sec:proof2}
 We first perform source resolvability with lossless reconstruction of the input from the output by means of ``random binning" \cite{HanBook,yassaee2014achievability,nafea2018new}. Note that standard resolvability results~\cite{HanBook} do not directly apply to our purposes as they do not support the recoverability constraint of the input from the output.

Let $m \in \mathbb{N}$ to be specified later and define $R_Y \triangleq H(Y) - \epsilon$, $\epsilon>0$, where $H(Y)$ is the entropy associated with the target distribution $p_Y$.
To each $y^m \in \mathcal{Y}^m$, we assign an index $B(y^m) \in \llbracket 1 , 2^{mR_Y} \rrbracket$ uniformly at random. The joint probability distribution between $Y^m$ and $B(Y^m)$ is given by,
\begin{align} \label{eq:rb}
 &\forall y^m \in \mathcal{Y}^m, \forall b \in \llbracket 1 , 2^{mR_Y} \rrbracket, \nonumber\\& \phantom{---} p_{Y^m B(Y^m)}(y^m,b) = p_{Y^m}(y^m) \mathds{1}\{ B(y^m) = b \}.
\end{align}
We then consider the random variable $\widehat{Y}^m$ that is distributed according to $\widehat{p}_{{Y}^m}$, where
\begin{align}\label{eq:rb2}
&\forall y^m \in \mathcal{Y}^m, \forall b \in \llbracket 1 , 2^{mR_Y} \rrbracket, \nonumber\\& \phantom{---} \widehat{p}_{{Y}^mB(Y^m)}(y^m,b) \triangleq p_{Y^m|B(Y^m)}(y^m|b) p_{U}(b),
\end{align}
and $p_{U}$ is the uniform distribution over $\llbracket 1 , 2^{mR_Y} \rrbracket$. We thus have
\begin{align}
\mathbb{E}_{B}\mathbb{V}(p_{Y^mB(Y^m)},\widehat{p}_{{Y}^mB(Y^m)}) \nonumber  \nonumber
& = \mathbb{E}_{B}\mathbb{V}(p_{B(Y^m)},\widehat{p}_{B(Y^m)})  \\ 
& = \mathbb{E}_{B}\mathbb{V}(p_{B(Y^m)},p_U) \\
& = o(m^{-r}), \label{eq:rb3}
\end{align}
where the last equality holds by \cite[Theorem 1]{yassaee2014achievability} for any $r>0$. 
Observe also that when $\widehat{y}^m$ is drawn according to $\widehat{p}_{Y^m|B(Y^m)=b}$ for some $b$, then $b$ can be perfectly recovered from $\widehat{y}^m$, since by \eqref{eq:rb}, \eqref{eq:rb2}, we have
\begin{align}\label{eq:rb4}
B( \widehat{y}^m ) = b.
\end{align} 
All in all, \eqref{eq:rb3} and \eqref{eq:rb4} mean that there exists a specific choice $B_0$ for the binning $B$ such that, if $b$ is a sequence of length $mR_Y$ distributed according to $p_U$ and $\widehat{y}^m$ is drawn according to $p_{Y^m|B_0(Y^m)=b}$, then $B_0(\widehat{y}^m)=b$ and $\mathbb{V}(\widehat{p}_{{Y}^m},p_{Y^m})=o(m^{-r})$ for any $r>0$. By the triangle inequality, the result stays true if $p_U$ is replaced by a distribution $\widetilde{p}_U$ that satisfies $\mathbb{V}(\widetilde{p}_U,p_U) =o(m^{-r})$ for any $r>0$. Note that the construction requires randomization at the encoder, however, the randomness needs not be known by the decoder.

We now combine source resolvability with lossless reconstruction of the input from the output and universal uniform source coding as follows.
Let $n \in \mathbb{N}$, and consider a variable-length uniform source code obtained from Theorem~\ref{Cor2} and described by the encoding/decoding pair $\left( \phi_n, \psi_n \right)$ where $\phi_n(X^n,U_{d_n}) = (\phi^{(1)}_n(X^n,U_{d_n}),\phi^{(2)}_n(X^n,U_{d_n}))$ as described in Section \ref{sec:cs} and define $M_1 \triangleq \phi^{(1)}_n(X^n,U_{d_n})$, $M_2 \triangleq \phi^{(2)}_n(X^n,U_{d_n})$. %
We then define the length of our universal covert source encoder output such that $
  \left\lceil m(X^n) R_Y \right\rceil = |M_1| + |M_2| +T$ for some $T \in \llbracket 0, R_Y  \rrbracket$, where $|\cdot|$ denotes the length of a sequence. 
We also define the sequence
$M \triangleq (M_1 \lVert C \lVert M_2),$
where $\lVert$ denotes the concatenation of sequences and $C$ is a sequence of $T$ bits uniformly distributed.

Finally, the encoder of our universal covert source code forms $\widehat{Y}^{m(X^n)}$ by source resolvability as previously described with the substitutions $b \leftarrow M$, $m \leftarrow m(X^n)$, so that the decoder of our universal covert source code determines from $\widehat{Y}^{m(X^n)}$, in this order, $M$, then $M_2$ (since the length of $M_2$ is known to be $\gamma_n$), then $|M_1|$ (since $M_2$ reveals the type of the compressed sequence $X^n$ given the seed), then $M_1$, and finally approximates $X^n$ using $\psi_n$ applied to $(M_1,M_2,U_{d_n})$. Hence, \eqref{d2} and \eqref{d3} hold. %
\begin{rem}
Note that the sequence $C$ does not carry information and is only used to pad the sequence $M$ such that $|M| = \left\lceil m(X^n) R_Y \right\rceil$.%
\end{rem}
Next, we have
\begin{align*}
	\frac{m(X^n)}{n} 
	& \stackrel{(a)} \leq \frac{|M_1|+|M_2|+T}{nR_Y} \\
	& \stackrel{(b)} \leq \frac{|M_1|+|M_2|}{n} \frac{1}{H(Y)-\epsilon} + \frac{1}{n},
\end{align*}
where $(a)$ holds by definition of $m(X^n)$, and $(b)$ holds by definition of $R_Y$. Hence, by Theorem~\ref{Cor2}, $\plim_{n \to \infty} m(X^n)/n \leq H(X)/(H(Y) - \epsilon)$. Finally, since the encoder output of the universal source code is almost uniform, as described in Theorem \ref{Cor2}, we have also obtained 
$ \lim_{n \to \infty}  \mathbb{V} \left(p_{\widehat{Y}^{m(X^n)}} , p^{\otimes m(X^n)}_{Y} \right) =o(m^{-r})$ for any $r>0$, which implies \eqref{d4} by Lemma \ref{corent}.

\section{A Constructive and Low-Complexity Coding Scheme} \label{sec:con}

Theorem \ref{Cor2} provides a coding scheme for universal uniform source coding but the implementation of the coding scheme is intractable since it relies on the method of types.   As for Theorem~\ref{Cor9}, it only provides an existence result (i.e., a non-constructive coding scheme) for universal source covertness. We present in this section a constructive and low-complexity counterpart to Theorem~\ref{Cor2} and Theorem~\ref{Cor9} for a binary source alphabet, i.e., $|\mathcal{X}|=2$. The seed length required in our coding scheme will be shown to be negligible compared to the length of the compressed sequence but will be larger than the one in Theorems~\ref{Cor2}, \ref{Cor9}.

In Definition \ref{def_ucc}, assume that the DMS $(\mathcal{X},p_X)$ is Bernoulli with parameter $p \neq 1/2$, unknown to the encoder and decoder, and that the DMS $(\mathcal{Y},p_Y)$ is such that $|\mathcal{Y}|$ is a prime number. Let $n \in \mathbb{N}^*$, $N \triangleq 2^n$, and consider a sequence $x^{LN}$ of $LN$, where $L \in \mathbb{N}^*$ will be specified later, independent realizations of $(\mathcal{X},p_X)$ that need to be compressed. Let $\lVert$ denote the concatenation of sequences, $\backslash$ denote set subtraction,  and $H_b$ denote the binary entropy. Also, define $G_n \triangleq  \left[ \begin{smallmatrix}
       1 & 0            \\[0.3em]
       1 & 1 
     \end{smallmatrix} \right]^{\otimes N} $, the polarization matrix defined in~\cite{Arikan10}, and define for any set $\mathcal{I} \subseteq \llbracket 1,N \rrbracket $, for any sequence $X^N$, the subsequence  $X^{N}[\mathcal{I}] \triangleq \left( X_{i}\right)_{i \in \mathcal{I}}$, where $X_{i}$, $i \in \mathcal{I}$, denotes the $i$-th component of~$X^{N}$.

\subsection{Coding scheme} \label{sec:decp}

\textbf{Encoding}: We proceed in three steps. Step 1 is the estimation of $p$. Step 2 corresponds to universal uniform source coding. It is performed with polar codes and generalizes both the coding scheme in \cite{Chou14prep}, which cannot account for uncertainty on the source distribution, and the coding scheme in~\cite{chou2020explicit}, which can only account for a compound setting. Step~3 corresponds to source resolvability with lossless reconstruction of the input from the output. It is also performed with polar codes with methods similar to those used in \cite{chou2014polar} but with the additional difficulty that the exact length of the input is unknown to the decoder.

\textbf{Step 1.} %
Let $t < 1/2$ and define 
$
q \triangleq \lceil N^{t} \rceil$, $\delta \triangleq N^{-t}.
$ We also define $a_i \triangleq i \delta, i \in \llbracket 0 , q-1 \rrbracket$, $a_q \triangleq 1$, $a_{-1} \triangleq a_0$, and $a_{q+1} \triangleq a_q$ such that 
 $\{ [a_i, a_{i+1}]\}_{i \in \llbracket 0, q-1\rrbracket}$ is a partition of $[0, 1]$.
 We estimate $p$ as 
 \begin{align*}
 	\hat{p} \triangleq \frac{1}{LN}\sum_{i=1}^{LN} \mathds{1}\{ x_i = 1 \}.
 \end{align*}
 There exists $i_0 \in \llbracket 1,q\rrbracket$ such that 
$\hat{p} \in [a_{i_0-1}, a_{i_0}]$. Next, we define 
\begin{align}
	{\underline{p}} &\triangleq \displaystyle\argmax_{a \in\{a_{i_0-2},a_{i_0+1}\}}  |a-1/2|, \label{eqp1} \\
	{\overline{p}} &\triangleq \begin{cases} \displaystyle\argmin_{a \in\{a_{i_0-2},a_{i_0+1}\}}  |a-1/2| & \text{if } \frac{1}{2} \notin [a_{i_0-2},a_{i_0+1}]\\
	\frac{1}{2} & \text{if } \frac{1}{2} \in [a_{i_0-2},a_{i_0+1}]
 \end{cases}.  \label{eqp}
\end{align}
Let $I_{0}$ be the binary representation of $i_0$ and form 
\begin{align}
	I_{N} \triangleq {I}_{0} \oplus K_0, \label{eqp2}
\end{align}
 where $K_0$ is a sequence of uniform bits with length $\lceil \log(q) \rceil = O(\log N)$ that is shared by the encoder and decoder.

\textbf{Step 2.}
 Let $X^N$ ($\underline{X}^N$, $\overline{X}^N$, respectively) be a sequence of $N$ independent Bernoulli random variables with parameter $p$ (${\underline{p}}$, ${\overline{p}}$, respectively). We perform universal uniform source coding on $X^N$ in this second step. Define ${U}^N \triangleq {X}^N G_n$, $\underline{U}^N \triangleq \underline{X}^N G_n$, $\overline{U}^N \triangleq \overline{X}^N G_n$, and for $\beta < 1/2$, $\delta_N \triangleq 2^{-N^{\beta}}$, define the sets
 \begin{align*}
\mathcal{H}_{\underline{X}} &\triangleq \{ i \in \llbracket 1, N \rrbracket : H(\underline{U}_i|\underline{U}^{i-1}) \geq \delta_N \}, \\
\mathcal{V}_{\underline{X}} &\triangleq \{ i \in \llbracket 1, N \rrbracket : H(\underline{U}_i|\underline{U}^{i-1}) \geq 1- \delta_N \},\\
\mathcal{H}_{\overline{X}} &\triangleq \{ i \in \llbracket 1, N \rrbracket : H(\overline{U}_i|\overline{U}^{i-1}) \geq \delta_N \}, \\
\mathcal{V}_{\overline{X}} &\triangleq \{ i \in \llbracket 1, N \rrbracket : H(\overline{U}_i|\overline{U}^{i-1}) \geq 1- \delta_N \},\\
\mathcal{H}_{{X}} &\triangleq \{ i \in \llbracket 1, N \rrbracket : H({U}_i|{U}^{i-1}) \geq \delta_N \}, \\
\mathcal{V}_{{X}} &\triangleq \{ i \in \llbracket 1, N \rrbracket : H({U}_i|{U}^{i-1}) \geq 1- \delta_N \}.
\end{align*}
We compress $X^N$ as 
 \begin{align}
 	A \triangleq \left( U^N[\mathcal{V}_{\underline{X}}]  \lVert U^N[\mathcal{H}_{\overline{X}} \backslash \mathcal{V}_{\underline{X}}] \oplus K \right), \label{eqenc}
 \end{align}
where $K$ is a sequence of $|\mathcal{\mathcal{H}_{\overline{X}} \backslash \mathcal{V}_{\underline{X}}}|$ uniformly distributed bits shared between the encoder and decoder. 

\textbf{Step 3.}
 We now repeat $L$ times Step 2 and perform source resolvability with lossless reconstruction of the input from the output.  We choose $M \triangleq N^2$ and let $Y^M$ be a sequence of $M$ independent and identically distributed random variables with distribution $p_Y$. We define $V^M \triangleq G_{2n} Y^M$, for $\beta < 1/2$, $\delta_M \triangleq 2^{-M^{\beta}}$, the set
$$
\mathcal{V}_Y \triangleq \left\{ i \in \llbracket 1, M \rrbracket : H(V_i|V^{i-1}) \geq \delta_M \right\},
$$
and 
\begin{align}
	L \triangleq \left\lfloor \frac{|\mathcal{V}_Y| - |I_{N}|}{|A| } \right\rfloor. \label{eqL}
\end{align}
 We apply Step 2 to $L$ independent sequences $X^N$ to form $A_i$,  $i \in \llbracket 1, L \rrbracket$. Note that this requires $L$ sequences $(K_i)_{i \in \llbracket 1, L \rrbracket}$ of shared randomness between the encoder and the decoder.
We denote the concatenation of these $L$ compressed sequences by $A_{\mathcal{L}} \triangleq  \stackrel[\!i \in \llbracket 1 , L \rrbracket ]{}{\lVert} \!\!\!\! A_i $. 

 Next, we let $R$ be a sequence of $|\mathcal{V}_Y| - L |A| - |I_N|$ uniformly distributed bits (only known by the encoder) and define $\widetilde{V}^M$ as follows.
We set $$\widetilde{V}^M [ \mathcal{V}_Y ] \triangleq (A_{\mathcal{L}} \lVert R \lVert I_N)$$ and successively draw the remaining components of $\widetilde{V}^M$ in $\mathcal{V}_{Y}^c$, according to 
      \begin{equation}  \label{true_distribcr}
\widetilde{p}_{V_j|V^{j-1}} (v_j|\widetilde{V}^{j-1}) \triangleq 
        {p}_{V_j|V^{j-1}} (v_j|\widetilde{V}^{j-1}) \text{ for }  j \in  \mathcal{V}_{Y}^c.
\end{equation}

Finally, the encoder returns $$\widetilde{Y}^M \triangleq G_{2n} \widetilde{V}^M.$$

\textbf{Decoding}. Upon observing $\widetilde{Y}^M$, the decoder computes $\widetilde{V}^M = G_{2n} \widetilde{Y}^M$ and recovers $I_N$ from the last $\lceil \log(q) \rceil$ bits of $\widetilde{V}^M [\mathcal{V}_Y]$. Next, with $K_0$ and $I_N$, the decoder  can recover $\underline{p}$ and $\overline{p}$ (from~\eqref{eqp1}, \eqref{eqp}, and \eqref{eqp2}), determine $\mathcal{H}_{\overline{X}}$ and $\mathcal{V}_{\underline{X}}$, and recover $A_{\mathcal{L}}$ from the first $L |A|$ bits of  $\widetilde{V}^M [\mathcal{V}_Y]$. With $(K_i)_{i \in \llbracket 1,L \rrbracket}$ and $A_{\mathcal{L}}$, the decoder can also recover $(U_i^N[\mathcal{V}_{\underline{X}} \cup \mathcal{H}_{\overline{X}} \backslash \mathcal{V}_{\underline{X}} ])_{i \in \llbracket 1,L \rrbracket}$ by \eqref{eqenc}. Finally, the decoder runs the successive cancellation decoder of \cite{Arikan10} to estimate  $X_i^N$, $i \in \llbracket 1,L \rrbracket$, from $U_i^N[ \mathcal{H}_{\overline{X}} ]$.

\begin{rem}
\eqref{true_distribcr} can be slightly simplified. Specifically, the randomizations could be
replaced by deterministic decisions for $j \in  \mathcal{H}^c_Y$, i.e., randomized decisions are only needed for $j \in  \mathcal{V}^c_Y \backslash \mathcal{H}^c_Y$, as shown
in \cite{chou2015using}.
\end{rem}

\begin{rem}
In the special case of source resolvability, i.e., when the source is known to have a uniform distribution, then no seed is required in our coding scheme.	This was already known, e.g., \cite[Remark 16]{chou2016empirical}. 
\end{rem}
\begin{rem}
In the special case of uniform compression when the distribution of the source to compress is known, polar coding schemes can also be obtained in the presence of side information~\cite{chou2020explicit}.  
\end{rem}

\subsection{Analysis}

\subsubsection{Reliability}
 Note that the estimator $\hat{p}$ used for the parameter $p$ is unbiased and has variance $\sigma^2 = O((LN)^{-1})$. Define the events 
$
\mathcal{E} \triangleq \{ \left(p \geq a_{I_0+1}\right) \text{ or } \left(p \leq a_{I_0-2} \right) \},
$
and 
$
\widetilde{\mathcal{E}} \triangleq \{  (p \leq \hat{p} - N^{-2t} ) \text{ or }  (p \geq \hat{p} + N^{-2t}) \}.
$
 We then have
 \vspace*{-1em}
\begin{align}
\mathbb{P} [ \mathcal{E}] 
 &\stackrel{(a)}{\leq} \mathbb{P} [\widetilde{\mathcal{E}}] \nonumber \\ 
 &= \mathbb{P} [| \hat{p} - p | \geq N^{-2t}]   \nonumber \\
& \stackrel{(b)}{\leq}  \frac{\sigma^2}{N^{-4t^2}} = O(N^{-1+4t^2}L^{-1}), \label{eqlimn}
\end{align}
where $(a)$ holds because for $N$ large enough, $ N^{-2t} = o(a_{I_0-1} - a_{I_0-2}) $, $N^{-2t} = o(a_{I_0+1} - a_{I_0}) $, and $[\hat{p} - N^{-2t}, \hat{p} + N^{-2t}]$ is thus a subinterval of $[a_{I_0-2},a_{I_0+1}]$, and $(b)$ holds by Chebyshev's inequality.
Hence, to show reliability, it is sufficient to consider that $H_b(\overline{p}) \geq H_b(p)$, since $\mathbb{P}[\{H_b(\overline{p}) < H_b(p)\}] = \mathbb{P}[ \{ |1/2-\overline{p}| > |1/2 -p| \}] \leq \mathbb{P} [ \mathcal{E}] \xrightarrow{N \to \infty} 0$ by~\eqref{eqlimn}.%

Recall that when $p$ is known to the encoder and decoder, \cite{Arikan10} shows that it is possible to reconstruct $X^N$ from $U^N[\mathcal{H}_X]$ with error probability bounded by $O(N\delta_N)$, where ${U}^N \triangleq {X}^N G_n$. The following lemma shows that when $p$ is unknown but $H_b(\overline{p}) \geq H_b(p)$, there is no loss of information by compressing $X^N$ as $U^N[\mathcal{H}_{\overline{X}}]$. Moreover, using the successive cancellation decoder of \cite{Arikan10}, by \cite[Lemma 4]{abbe2010universal}, one can reconstruct $X^N$ from $U^N[\mathcal{H}_{\overline{X}}]$ with error probability bounded by $O(N \delta_N)$.
\begin{lem}\label{lemset}
We have $\mathcal{H}_X \subset \mathcal{H}_{\overline{X}}$.%
\end{lem}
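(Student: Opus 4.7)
The plan is to reduce the claimed set inclusion to a pointwise comparison of the polarized conditional entropies of the two sources, and then to invoke the monotonicity of source polarization under the binary-entropy ordering.

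First, unpacking the two definitions, the inclusion $\mathcal{H}_X \subset \mathcal{H}_{\underline{X}}$ is an immediate consequence of the coordinate-wise inequality
\[
H(U_i \mid U^{i-1}) \leq H(\underline{U}_i \mid \underline{U}^{i-1}), \qquad i \in \llbracket 1, N \rrbracket,
\]
since such inequalities yield the implication $H(U_i|U^{i-1}) \geq \delta_N \Rightarrow H(\underline{U}_i|\underline{U}^{i-1}) \geq \delta_N$. It therefore suffices to prove the above pointwise inequality for every $i$.

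Next, I would argue as follows. The polar transform is an $\mathbb{F}_2$-linear bijection, so flipping every $X_j$ (which replaces the source parameter $p$ by $1-p$) only toggles each $U_i$ by a fixed bit and leaves each $H(U_i|U^{i-1})$ unchanged. Without loss of generality we may therefore assume $p, \underline{p} \in [0, 1/2]$, in which case the hypothesis $H_b(p) \leq H_b(\underline{p})$ is equivalent to $p \leq \underline{p}$. In this regime the Bernoulli-$p$ source can be viewed as a stochastically less random version of the Bernoulli-$\underline{p}$ source, and the polar transform of \cite{Arikan10} preserves this ordering at every coordinate; this is exactly the monotonicity property that underlies the universal polar source code of \cite{abbe2010universal}. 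Applied to our two Bernoulli sources, it gives the pointwise inequality, and hence the lemma.

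The main obstacle, beyond routine bookkeeping, is matching the precise form of the monotonicity statement needed here with what is actually proven in \cite{abbe2010universal}. That reference typically phrases the preservation of stochastic degradation under polarization via the Bhattacharyya parameters of the synthesized channels, whereas the sets $\mathcal{H}_X$ and $\mathcal{H}_{\underline{X}}$ are defined through conditional entropies. Translating one to the other is standard, using that $H(U_i|U^{i-1})$ is a monotone function of the Bhattacharyya parameter of the $i$-th synthesized channel, but this conversion should be spelled out rather than left implicit.
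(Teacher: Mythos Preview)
Your reduction to the pointwise inequality $H(U_i\mid U^{i-1})\le H(\underline{U}_i\mid \underline{U}^{i-1})$ and the appeal to the degradation ordering of Bernoulli sources is exactly the paper's approach. The paper carries it out by writing $\underline{p}=\alpha(1-p)+(1-\alpha)p$ for some $\alpha\in[0,1]$ (\cite[Lemma~2]{abbe2010universal}), so that $\underline{X}^N$ has the same law as $X^N\oplus B^N$ with $B^N$ i.i.d.\ Bernoulli$(\alpha)$; the proof of \cite[Lemma~3]{abbe2010universal} then yields the entropy inequality directly at each coordinate. Your symmetry reduction to $p,\underline{p}\in[0,1/2]$ is a harmless variant of the same coupling.

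The one genuine issue is your final paragraph. The claim that ``$H(U_i\mid U^{i-1})$ is a monotone function of the Bhattacharyya parameter of the $i$-th synthesized channel'' is false: for a fixed realization $u^{i-1}$ the conditional entropy and Bhattacharyya parameter are both determined by the single number $P(U_i=1\mid u^{i-1})$, but after averaging over $U^{i-1}$ neither determines the other, and no monotone functional relationship survives. So if you really had only a Bhattacharyya ordering, the conversion you propose would not go through. Fortunately the detour is unnecessary: the degradation argument in \cite{abbe2010universal} (which you yourself invoke as ``stochastic degradation'') gives the conditional-entropy ordering directly, because adding the independent noise $B^N$ and passing through the linear map $G_n$ can only increase each $H(U_i\mid U^{i-1})$. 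Drop the Bhattacharyya discussion and cite the coupling plus \cite[Lemma~3]{abbe2010universal} for entropies, and your proof coincides with the paper's.
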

\begin{proof}
We closely follow \cite{abbe2010universal}. There exists $\alpha \in [0,1]$ such that $\overline{p} = \alpha (1-p) + (1-\alpha)p$ since we have $H_b(\overline{p})\geq H_b(p)$. Let $B^N$ be a sequence of $N$ identically and independently distributed Bernoulli random variables with parameter $\alpha$ independent of all other random variables. Then, $\widetilde{\overline{X}}^N \triangleq X^N \oplus B^N$ has the same distribution as $\overline{X}^N$. Then, define $\widetilde{\overline{U}}^N \triangleq G_n\widetilde{\overline{X}}^N $. We have for any $ i \in \llbracket 1, N \rrbracket$, $H(\widetilde{\overline{U}}_i|\widetilde{\overline{U}}^{i-1}) \geq H(\widetilde{\overline{U}}_i|\widetilde{\overline{U}}^{i-1} B^N) = H({{U}}_i|{{U}}^{i-1} B^N) = H(U_i|U^{i-1}) $. We conclude that
$\mathcal{H}_X \subset \mathcal{H}_{\overline{X}}$. %
\end{proof}

Consequently, the decoding scheme of Section \ref{sec:decp} succeeds in reconstructing $X^{NL}$ with error probability bounded by $O(NL \delta_N)$, which vanishes as $N \to \infty $ since $L = O(N)$ by \eqref{eqL}.

\subsubsection{Covertness}

Similar to the analysis of reliability, to show that the covertness condition holds in probability, it is sufficient to show covertness when $H_b(\underline{p}) \leq H_b(p)$. In this case, similar to Lemma~\ref{lemset}, we have the following lemma.
\begin{lem} \label{lemset2}
	We have $ \mathcal{V}_{\underline{X}} \subset \mathcal{V}_X $ and $\mathcal{H}_{\underline{X}} \subset  \mathcal{H}_{\overline{X}}$.
\end{lem}
Hence, by Lemma \ref{lemset2}, $\mathcal{V}_{\underline{X}} \subset \mathcal{H}_{\underline{X}} \subset  \mathcal{H}_{\overline{X}}$, and $|A| =|\mathcal{H}_{\overline{X}}| $. Then, let $p_{U_{\mathcal{V}_{\underline{X}}}}$ and $p_{U_{\mathcal{H}_{\overline{X}}}}$ be the uniform distribution over $\llbracket 1,2^{|\mathcal{V}_{\underline{X}}|} \rrbracket$ and $\llbracket 1,2^{|\mathcal{H}_{\overline{X}}|} \rrbracket$, respectively. Observe that $A_i$, $i \in \llbracket 1,L\rrbracket$, is nearly uniform in the sense that
\begin{align} \label{eqsingle}
{D} (p_{A_i} \lVert p_{U_{\mathcal{H}_{\overline{X}}}})
& \stackrel{(a)}=  {D} (U^N[\mathcal{V}_{\underline{X}}] \lVert p_{U_{\mathcal{V}_{\underline{X}}}}) \nonumber \\
& =  |\mathcal{V}_{\underline{X}}| - H(U^N[\mathcal{V}_{\underline{X}}] ) \nonumber \displaybreak[0] \\
& \stackrel{(b)}\leq  |\mathcal{V}_{\underline{X}}| - \sum_{j \in \mathcal{V}_{\underline{X}}} H(U_j | U^{j-1}) \nonumber  \displaybreak[0] \\
& \stackrel{(c)}\leq  |\mathcal{V}_{\underline{X}}| - \sum_{j \in \mathcal{V}_{\underline{X}}} (1-\delta_N) \nonumber \\
& \leq N \delta_N,
\end{align}
where $(a)$ holds by the chain rule for KL-divergence and uniformity of $K$, $(b)$ holds by the chain rule and because conditioning reduces entropy, $(c)$ holds by definition of  $ \mathcal{V}_{\underline{X}} $.

Next, define  $p_{U_{\mathcal{L}}}$ the uniform distribution over $\llbracket 1,2^{L|\mathcal{H}_{\overline{X}}|} \rrbracket$ and define $\bar{V}^M$ similar to $\widetilde{V}^M$ but by replacing $A_{\mathcal{L}}$ in the description of Step 3 in Section \ref{sec:decp} by a sequence distributed according to $p_{U_{\mathcal{L}}}$. We have  
\begin{align}
\mathbb{D} (\widetilde{p}_{V^M } \lVert  \bar{p}_{V^M })
& \stackrel{(a)}{\leq} 
\mathbb{D} (\widetilde{p}_{V^M A_{\mathcal{L}}} \lVert  \bar{p}_{V^M U_{\mathcal{L}}}) \nonumber \\
& = \mathbb{D} (\widetilde{p}_{V^M | A_{\mathcal{L}}} p_{A_{\mathcal{L}}} \lVert  \bar{p}_{V^M | U_{\mathcal{L}}} p_{ U_{\mathcal{L}}}) \nonumber \\
& \stackrel{(b)}{\leq}  \mathbb{D} ( p_{A_{\mathcal{L}}} \lVert  p_{U_{\mathcal{L}}}) \nonumber \\
& \stackrel{(c)}{=} \sum_{i=1}^L  \mathbb{D} (p_{A_{i}} \lVert  p_{U})  \nonumber \\
&\stackrel{(d)}{\leq} LN \delta_N, \label{eqmult} 
\end{align}
where $(a)$ holds by the chain rule and positivity of the KL-divergence, $(b)$ holds by the chain rule and since $\bar{V}^M$ and $\widetilde{V}^M$ are produced similarly given $U_{\mathcal{L}}$ or $A_{\mathcal{L}}$, $(c)$ holds by the chain rule, $(d)$ holds by \eqref{eqsingle}.

Finally, we have
 \begin{align*}
&\mathbb{D} (\widetilde{p}_{Y^M } \lVert  {p}_{Y^M})\\
&=\mathbb{D} (\widetilde{p}_{V^M } \lVert  {p}_{V^M})\\
& \stackrel{(a)}{\leq}  M \log \mu_V^{-1} \sqrt{2 \ln 2} \left[ \sqrt{\mathbb{D} (\widetilde{p}_{V^M } \lVert  \bar{p}_{V^M})} + \sqrt{\mathbb{D} (\bar{p}_{V^M} \lVert  {p}_{V^M})} \right]\\
&  \stackrel{(b)}{\leq}   M \log \mu_V^{-1} \sqrt{2 \ln 2} \left[ \sqrt{ LN \delta_N} + \sqrt{ M\delta_M} \right]\\
& \xrightarrow{N\to \infty} 0,
\end{align*}
where $(a)$ holds by Lemma \ref{corent} with $\mu_V \triangleq \min_{v \in \mathcal{V}}p_V(v)$, $(b)$ holds by \eqref{eqmult} and because $ \mathbb{D} (\bar{p}_{V^M} \lVert  {p}_{V^M}) \leq  M\delta_M$, which can be shown by similar arguments to \cite[Lemma 1]{chou2016empirical}, and the limit holds since $L = O(N)$ and $M= N^2$.

\subsubsection{Compression rate}
By definition of $L$, there exists $r \in \llbracket 0 , |\mathcal{H}_{\overline{X}}| -1 \rrbracket$ such that
$L |\mathcal{H}_{\overline{X}}| + r = |\mathcal{V}_Y| - |I_N|.$ We deduce
\begin{align*}
\frac{LN}{M}
 &= \frac{|\mathcal{V}_Y|/M}{|\mathcal{H}_{\overline{X}}|/N} - \frac{|I_N|/M + r/M}{|\mathcal{H}_{\overline{X}}|/N}  \\
 & \leq \frac{|\mathcal{V}_Y|/M}{|\mathcal{H}_{\overline{X}}|/N} \\
 & \xrightarrow{N \to \infty} \frac{H(Y)}{H(X)},
\end{align*}
where the limit holds in probability because $\lim_{N \to \infty}|\mathcal{V}_Y|/M=H(Y)$ by \cite[Lemma~7]{chou2014polar}, $\lim_{N \to \infty}|\mathcal{H}_X|/N=H(X)$ by \cite{Arikan10}, and $\plim_{N \to \infty} H_b(\bar{p}) = H_b(p)$ by \eqref{eqlimn}. %
\subsubsection{Length of the shared seed}
Finally, we verify that the length of the shared seed that is needed in the coding scheme of Section \ref{sec:decp} is negligible compared to the total length $LN$ of the sequence that is compressed.
Note that in Step 2 $|K|=o(N)$ since $|\mathcal{\mathcal{H}_{\overline{X}} \backslash \mathcal{V}_{\underline{X}}}| = |\mathcal{\mathcal{H}_{\overline{X}} | -| \mathcal{V}_{\underline{X}}}|$ (because $\mathcal{V}_{\underline{X}} \subset \mathcal{H}_{\overline{X}}$ by Lemma~\ref{lemset2}) and $\lim_{N\to \infty} |\mathcal{H}_{\overline{X}} |/N = \lim_{N\to \infty} H_b(\overline{p})  = \lim_{N\to \infty} H_b(\underline{p})= \lim_{N\to \infty} |\mathcal{V}_{\underline{X}} |/N$ (by~\cite{Arikan10} and \cite[Lemma~1]{Chou14rev}). Hence, the total length of the shared seed is $\sum_{i=0}^L |K_i| = |K_0| + L|K| = o(LN).$

\section{Concluding Remarks} \label{sec:concl}
We have introduced the notion of universal covertness for DMSs to generalize information-theoretic steganography \cite{cachin2004information}, uniform lossless source coding~\cite{Chou13}, source resolvability \cite{HanBook}, and random number conversion~\cite{HanBook,kumagai2013new}. %

Our proposed coding scheme consists of the combination of (i) a type-based coding scheme able to simultaneously perform universal lossless source coding and ensure an almost uniform encoder output, and (ii) source resolvability with lossless reconstruction of the input from the output.
Our coding scheme uses a seed, i.e., a uniformly distributed sequence of bits, shared by the encoder and the decoder.
Although our seed rate vanishes as $n$ grows to infinity and has a length $\Theta ( \log n)$, it is not clear  whether a smaller seed could offer similar convergence rates. %

Finally, we have proposed an explicit low-complexity encoding and decoding  scheme for~universal covertness of binary memoryless sources based on polar codes. 
Our coding scheme requires a seed length that grows faster than $\log n$, yet, its rate still vanishes as $n$~grows. Note that in the special case of source resolvability, i.e., when the source is known to have a uniform distribution, then no seed is required in our coding scheme.

\section*{Acknowledgment}
The authors would like to thank the Associate Editor and the anonymous reviewers for their valuable comments. In particular, the authors thank one reviewer for suggesting an alternative achievability scheme that led to Theorem \ref{Cor2} and established sufficiency of a logarithmic amount of common randomness.

\appendices

\section{Proof of Lemma \ref{lem_V}} \label{App_V}
Fix $p_{\bar{X}} \in \mathcal{P}_n(\mathcal{X})$. We write $b_n$, $V_n$, $\bar{V}_n$ instead of $b_n(p_{\bar{X}})$, $V_n(p_{\bar{X}})$, $\bar{V}_n(p_{\bar{X}})$, respectively, to simplify the notation. By Euclidean division, there exist $q \in \mathbb{N}$, $r \in  \llbracket 0, b_n -1\rrbracket$ such that $a_n-1 = b_n q +r$. Next, for $v \in \llbracket 0, b_n -1\rrbracket$, we have
\begin{align*}
	p_{V_n}(v) 
	& =  \mathbb{P}[U_1\mod b_n = v] \\
    & =  \sum_{u = 0}^{a_n -1} p_{U_1}(u)\mathds{1} \{ u \mod b_n = v\}\\
	&= \frac{1}{a_n} \sum_{u = 0}^{a_n -1}\mathds{1} \{ u \mod b_n = v\}\\
	& = \begin{cases} \frac{q+1}{a_n} &  \text{ if }v \leq r\\ \frac{q}{a_n} &  \text{ if }v > r  \end{cases}.
\end{align*}
Then, we have
\begin{align*}
&\mathbb{V}\left(p_{V_n},p_{\bar{V}_n}\right) \\
& = \sum_{v \in \llbracket 0, b_n -1\rrbracket}	 |p_{V_n}(v)-p_{\bar{V}_n}(v)|\\
& = \sum_{v \in \llbracket 0, r\rrbracket}	 \left\lvert \frac{q+1}{a_n}-\frac{1}{b_n}\right\rvert + \sum_{v \in \llbracket r+1, b_n-1\rrbracket}	 \left\lvert\frac{q}{a_n}-\frac{1}{b_n}\right\rvert\\
& = \sum_{v \in \llbracket 0, r\rrbracket}	 \left\lvert \frac{(q+1)b_n-a_n}{a_n b_n}\right\rvert + \sum_{v \in \llbracket r+1, b_n-1\rrbracket}	 \left\lvert\frac{q b_n-a_n}{a_n b_n}\right\rvert\\
& = \sum_{v \in \llbracket 0, r\rrbracket}	 \left\lvert \frac{b_n - r - 1}{a_n b_n}\right\rvert + \sum_{v \in \llbracket r+1, b_n-1\rrbracket}	 \left\lvert\frac{r+1}{a_n b_n}\right\rvert\\
& = (r+1)	 O \left( \frac{1}{a_n}\right) + (b_n-1 -r ) O \left( \frac{1}{a_n}\right)\\
& = b_n	 O \left( \frac{1}{a_n}\right) \\
&  =  O\left(\frac{1}{n^{\alpha-\beta}}\right).
\end{align*}

\section{Proof of Lemma \ref{lemcond}} \label{App_lemcond}

 Fix $p_{\bar{X}} \in \mathcal{P}_n(\mathcal{X})$. We write $b_n$, $c_n$, $V_n$, $\bar{V}_n$, $p_{U}$, $k$, instead of $b_n(p_{\bar{X}})$, $c_n(p_{\bar{X}})$,$V_n(p_{\bar{X}})$, $\bar{V}_n(p_{\bar{X}})$, $p_{U(p_{\bar{X}})}$, $k_{s(p_{\bar{X}})}$, respectively, to simplify the notation.
	 
	 Define $\bar{\phi}^{(1)}_n(X^n,U_{d_n})$ as $\phi^{(1)}_n(X^n,U_{d_n})$ when $V_n$ is replaced by $\bar{V}_n$. We have
\begin{align}
	&\mathbb{V} \left( p_{\bar{\phi}^{(1)}_n(X^n,U_{d_n})|X^n \in T_{\bar{X}}^n } , p_{U}\right) \nonumber \\ \nonumber
	& = \sum_{m \in \mathcal{M}} \left\lvert p_{\bar{\phi}^{(1)}_n(X^n,U_{d_n})| X^n \in T_{\bar{X}}^n}(m) - p_{U}(m)\right\rvert \\ \nonumber
	& = {\sum_{m \in \mathcal{M}} \left\lvert  \frac{1}{|T_{\bar{X}}^n|b_n}  \sum_{v=0}^{b_n-1} \sum_{x^n \in T_{\bar{X}}^n}\mathds{1} \{ k(x^n) + v |T_{\bar{X}}^n|=m \} \right. }\\\nonumber
	&\smash{\left. \phantom{---------------\frac{1}{|T_{\bar{X}}^n|b_n}  \sum_{v=0}^{b_n-1} }  - p_{U}(m) \right\rvert} \\ \nonumber
	& \stackrel{(a)}\leq \sum_{m = c_n -|T_{\bar{X}}^n| + 1}^{c_n-1} \max \left( p_{U}(m), \left\lvert\frac{1}{|T_{\bar{X}}^n|b_n}  - p_{U}(m) \right\rvert\right) \\\nonumber
	& \phantom{--} + \sum_{m =0}^{c_n -|T_{\bar{X}}^n|} \left\lvert  \frac{1}{|T_{\bar{X}}^n|b_n}   - p_{U}(m) \right\rvert \\ \nonumber
	& \stackrel{(b)}\leq \sum_{m = c_n  -|T_{\bar{X}}^n| + 1}^{c_n-1}  p_{U}(m)  + \sum_{m =0}^{c_n -1} \left\lvert  \frac{1}{|T_{\bar{X}}^n|b_n}   - p_{U}(m) \right\rvert \\ \nonumber
	& \leq \frac{|T_{\bar{X}}^n|}{c_n}    +  \left\lvert  \frac{c_n}{|T_{\bar{X}}^n|b_n}   - 1 \right\rvert \\ \nonumber
	& \stackrel{(c)} \leq \frac{1}{n^{\beta}} + \frac{c_n}{c_n-|T_{\bar{X}}^n|} -1 \\ \nonumber
	& = \frac{1}{n^{\beta}} + \frac{1}{ \frac{c_n}{|T_{\bar{X}}^n|}-1}\\ \nonumber
	& \stackrel{(d)} \leq  \frac{1}{n^{\beta}} + \frac{1}{ n^{\beta}-1}\\
	& = O\left(\frac{1}{n^{\beta}}\right), \label{eq1}
\end{align}
where $(a)$ holds because  for $v \in \llbracket 0 , b_n -1 \rrbracket$, $x^n \in T_{\bar{X}}^n$,  $k(x^n) + v |T_{\bar{X}}^n| \in \llbracket 0 , b_n|T_{\bar{X}}^n| -1  \rrbracket$ and $
 b_n |T_{\bar{X}}^n| \in \llbracket c_n -|T_{\bar{X}}^n|  + 1, c_n \rrbracket
$, $(b)$ holds because for any $x,y\geq 0$, $\max(x,y) \leq x+y$, $(c)$ and $(d)$ holds because $c_n \in \llbracket n^{\beta} |T_{\bar{X}}^n|, 2 n^{\beta} |T_{\bar{X}}^n| -1\rrbracket$ and $b_n |T_{\bar{X}}^n| \in \llbracket c_n -|T_{\bar{X}}^n|  + 1, c_n\rrbracket$.
Next, we have
\begin{align*}
	&\mathbb{V} \left(p_{{\phi}^{(1)}_n(X^n,U_{d_n})| X^n \in T_{\bar{X}}^n}, p_{U}\right) \\
	& \stackrel{(a)}\leq \mathbb{V} \left(p_{{\phi}^{(1)}_n(X^n,U_{d_n})| X^n \in T_{\bar{X}}^n},p_{\bar{\phi}^{(1)}_n(X^n,U_{d_n})| X^n \in T_{\bar{X}}^n} \right) \\
	& \phantom{--}+  \mathbb{V} \left(p_{\bar{\phi}^{(1)}_n(X^n,U_{d_n})| X^n \in T_{\bar{X}}^n} , p_{U}\right)\\
	& \stackrel{(b)} \leq \mathbb{V} \left(p_{V_n},p_{\bar{V}_n} \right) +  \mathbb{V} \left(p_{\bar{\phi}^{(1)}_n(X^n,U_{d_n})| X^n \in T_{\bar{X}}^n},  p_{U}\right) \\
	& \stackrel{(c)} \leq O\left(\frac{1}{n^{\min(\alpha - \beta, \beta)}}\right),
\end{align*}
where $(a)$ holds by the triangle inequality, $(b)$ holds by the data processing inequality for the variational distance, $(c)$ holds by~\eqref{eq1} and Lemma \ref{lem_V}.

\section{Proof of Lemma \ref{lem_avg}} \label{App_avg}
We have 	
\begin{align*}
&\mathbb{V} \left( p_{\phi^{(1)}_n(X^n,U_{d_n})} , p_{U(X^n)}\right)		\\
&= \sum_{m=0}^{c_n(X^N)} \left\lvert p_{\phi^{(1)}_n(X^n,U_{d_n})}(m) - p_{U(X^n)}(m)\right\lvert \\
&= \sum_{m=0}^{c_n(X^N)} \left\lvert  \sum_{p_{\bar{X}} \in \mathcal{P}_n(\mathcal{X})} \mathbb{P} [X^n \in T_{\bar{X}}^n] \right. \\
& \left. \phantom{\sum_{m=0}^{c_n(X^N)}--}  \times \left( p_{\phi^{(1)}_n(X^n,U_{d_n})| X^n \in T_{\bar{X}}^n}(m)  - p_{U(p_{\bar{X}})} (m)\right)\right\lvert \\
& \stackrel{(a)} \leq  \sum_{p_{\bar{X}}\in \mathcal{P}_n(\mathcal{X})} \mathbb{P} [X^n \in T_{\bar{X}}^n]   \\
&  \phantom{\sum_{m=0}^{c_n(X^N)}--}  \times  \sum_{m=0}^{c_n(X^N)}\left\lvert    p_{\phi^{(1)}_n(X^n,U_{d_n})| X^n \in T_{\bar{X}}^n}(m) - p_{U(p_{\bar{X}})} (m)\right\lvert \\
& \stackrel{(b)}  \leq O\left(\frac{1}{n^{\min(\alpha - \beta, \beta)}}\right),
\end{align*}
where $(a)$ holds by the triangle inequality, $(b)$ holds by Lemma~\ref{lemcond}.

\section{Proof of Lemma \ref{lem_div}} \label{App_div}
	We have 
	\begin{align*}
&\mathbb{D} \left( p_{\phi^{(1)}_n(X^n,U_{d_n})} \lVert  p_{U(X^n)}\right)		\\
&=  \log c_n(X^n) - H(\phi^{(1)}_n(X^n,U_{d_n})) \\
& \stackrel{(a)}\leq - \mathbb{V}\!\! \left( p_{\phi^{(1)}_n(X^n,U_{d_n})} , p_{U(X^n)} \!\! \right) \log \mathbb{V}\!\! \left( p_{\phi^{(1)}_n(X^n,U_{d_n})} , p_{U(X^n)}\!\!\right) \\
& \phantom{--}+ \mathbb{V} \left( p_{\phi^{(1)}_n(X^n,U_{d_n})} , p_{U(X^n)}\right) \log c_n(X^n) \\ 
& \stackrel{(b)}\leq  O\left(\frac{\log n}{n^{\min(\alpha - \beta, \beta)}}\right) +  O\left(\frac{n + \log n}{n^{\min(\alpha - \beta, \beta)}}\right)\\
& =  O\left(\frac{1}{n^{\min(\alpha - \beta, \beta)-1}}\right),
\end{align*}
where $(a)$ holds by \cite[Lemma 2.7]{bookCsizar}, $(b)$ holds by Lemma~\ref{lem_avg}.

\section{Proof of asymptotic rate optimality in Theorem \ref{Cor9} } \label{App_converse}

Consider an $(n,2^{d_n})$ variable-length universal covert source code for a DMS $(\mathcal{X},p_X)$ with respect to the DMS $(\mathcal{Y},p_Y)$ and let $M$ denote the encoder output. Assume that $\displaystyle\lim_{n \to \infty} \mathbb{D} \left(p_{M} \lVert  p^{\otimes m(X^n)}_{Y} \right) = 0$, and $\lim_{n \to \infty} \mathbb{P}[\widehat{X}^n \neq X^n]=0$, where $\widehat{X}^n$ denotes the estimate of $X^n$ formed by the decoder. We have
 \begin{align*}
& m(X^n)H(Y) \\
& = H(Y^{m(X^n)}) \\ 
&  \stackrel{(a)}{=} H(M) - D \displaybreak[0] \\
&  \stackrel{(b)}{=} H(M X^nU_{d_n}) - H(X^n|MU_{d_n}) - H(U_{d_n}|M) - D   \displaybreak[0] \\
&  \stackrel{(c)}{\geq}  H(X^n) + H(U_{d_n}) - H(X^n|MU_{d_n}) - H(U_{d_n}|M) - D  \displaybreak[0] \\
&  \stackrel{(d)}{\geq}  H(X^n) - H(X^n|MU_{d_n})  - D  \\
& \stackrel{(e)}{\geq}  nH(X) - (1 + n \epsilon_n)  - D ,
\end{align*}
where in $(a)$ we have defined $D \triangleq H(M)-H(Y^{m(X^n)})$, $(b)$ holds by the chain rule, $(c)$ holds by independence between  $X^n$ and $U_{d_n}$, $(d)$ holds because $I(U_{d_n};M)\geq 0$, $(e)$ holds by Fano's inequality with $\epsilon_n$ such that $\lim_{n \to \infty} \epsilon_n =0$ because $H(X^n|MU_{d_n})\leq H(X^n|\widehat{X}^n)$.

Hence, we obtain
 \begin{align} \label{eqff}
 \frac{{m(X^n)}}{n} \geq \frac{H(X)}{H(Y)} - 
 \frac{1}{nH(Y)} - \frac{\epsilon_n}{H(Y)}  - \frac{D}{nH(Y)}.
\end{align}
Next, define the function $$g:(x,y)\mapsto  \sqrt{2 \ln2} \sqrt{x} \log \left(y/(\sqrt{2 \ln2} \sqrt{x})\right).$$ Using~\cite[Lemma 2.7]{bookCsizar}, Pinsker's inequality, and the fact that $x \mapsto -x \log x$ is increasing over $]0,e^{-1}]$, we have
 \begin{align}\label{eqfff}
\displaystyle\plim_{n \to \infty} \frac{D}{nH(Y)} 
\leq \displaystyle\plim_{n \to \infty} \frac{m(X^n)}{n}\frac{g\left(\mathbb{D} \left(p_{M} \lVert  p^{\otimes m(X^n)}_{Y} \right),|\mathcal{Y}| \right)}{H(Y)}.
\end{align}
Finally, by combining \eqref{eqff}, \eqref{eqfff}, and the hypothesis $\displaystyle\lim_{n \to \infty} \mathbb{D} \left(p_{M} \lVert  p^{\otimes m(X^n)}_{Y} \right)=0 $, we obtain
$
 \displaystyle\plim_{n \to \infty}  \frac{m(X^n)}{n} \geq \frac{H(X)}{H(Y)} .
$

 \bibliographystyle{IEEEtran}
\bibliography{polarwiretap}
\end{document}